\theoremstyle{plain}
\newtheorem{theorem}{Theorem}
\newtheorem{proposition}[theorem]{Proposition}
\newtheorem{definition}[theorem]{Definition}
\newtheorem{replacements}{Replacements}
\newcommand{\abs}[1]{\left\lvert#1\right\rvert}
\DeclareMathOperator{\Dom}{dom}
\newcommand{\rest}[2]{#1\!\!\restriction_{#2}}
\newcommand{\reste}[2]{#1\restriction_{#2}}
\newcommand{\N}{\mathbb{N}}%
\newcommand{\Q}{\mathbb{Q}}%
\newcommand{\R}{\mathbb{R}}%
\newcommand{\X}{\{0,1\}^*}%
\newcommand{\XI}{\{0,1\}^\infty}%
\begin{document}
\title{Phase Transition and Strong Predictability}

\author{\IEEEauthorblockN{Kohtaro Tadaki}
\IEEEauthorblockA{Research and Development Initiative, Chuo University\\
1-13-27 Kasuga, Bunkyo-ku, Tokyo 112-8551, Japan\\
Email: tadaki@kc.chuo-u.ac.jp\quad
WWW: http://www2.odn.ne.jp/tadaki/}}

\maketitle

\begin{abstract}
\boldmath
The statistical mechanical interpretation of algorithmic information theory (AIT, for short) was introduced and developed in our former work
[K.~Tadaki, Local Proceedings of CiE 2008, pp.425--434, 2008],
where we introduced
the notion of thermodynamic quantities into AIT.
These quantities are real functions of
temperature $T>0$.
The values of all the thermodynamic quantities diverge when
$T$ exceeds $1$.
This phenomenon corresponds to phase transition in statistical mechanics.
In this paper we introduce the notion of strong predictability for an infinite binary sequence and then apply it to the partition function $Z(T)$,
which is one of the thermodynamic quantities in AIT.
We then reveal a new computational aspect of the phase transition in AIT by showing the critical difference of the behavior of $Z(T)$ between $T=1$ and $T<1$
in terms of the strong predictability for the base-two expansion of $Z(T)$.
\end{abstract}
\IEEEpeerreviewmaketitle

\section{Introduction}
\label{introduction}

Algorithmic information theory (AIT, for short) is a framework for applying
information-theoretic and probabilistic ideas to
computability theory.
One of the primary concepts of AIT is the \emph{program-size complexity} (or \emph{Kolmogorov complexity}) $H(x)$ of a finite binary string $x$,
which is defined as the length of the shortest binary
program
for a universal decoding algorithm $U$, called an \emph{optimal prefix-free machine}, to output $x$.
By the definition, $H(x)$ is thought to represent the amount of randomness contained in a finite binary string $x$.
In particular, the notion of program-size complexity plays a crucial role in characterizing the \emph{randomness} of an infinite binary sequence, or equivalently, a real.
In \cite{C75} Chaitin introduced
the $\Omega$ number
as a concrete example of random real.
The first $n$ bits of the base-two expansion of $\Omega$ solve the halting problem of $U$ for inputs of length at most $n$.
By this property,
$\Omega$ is shown to be a random real,
and
plays a central role in the development of AIT.

In this paper, we study the \emph{statistical mechanical interpretation} of AIT.
In a series of works \cite{T08CiE,T09LFCS,T09ITW,T10JPCS,T11ITW,T12WTCS}, we introduced and developed
this particular subject of AIT.
First, in \cite{T08CiE} we introduced the \emph{thermodynamic quantities} at temperature $T$,
such as partition function $Z(T)$, free energy $F(T)$, energy $E(T)$, statistical mechanical entropy $S(T)$, and specific heat $C(T)$, into AIT.
These quantities are real functions of a real argument $T>0$, and are introduced
in the following manner:
Let $X$ be a complete set of energy eigenstates of
a quantum system
and $E_x$ the energy of an energy eigenstate $x$ of the quantum system.
In \cite{T08CiE} we introduced
thermodynamic quantities
into AIT by performing Replacements~\ref{CS06} below for the corresponding thermodynamic quantities
in statistical mechanics.

\begin{replacements}\label{CS06}\hfill
\vspace*{-1mm}
\begin{enumerate}
  \item Replace the complete set $X$ of energy eigenstates $x$
    by the set $\Dom U$ of all programs $p$ for $U$.
  \item Replace the energy $E_x$ of an energy eigenstate $x$
    by the length $\abs{p}$ of a program $p$.
  \item Set the Boltzmann Constant $k_{\mathrm{B}}$ to $1/\ln 2$.\hfill\IEEEQED
\end{enumerate}
\end{replacements}

\vspace*{-1mm}

For example, in statistical mechanics, the partition function $Z_{\mathrm{sm}}(T)$ at temperature $T$ is given by
\vspace*{-0.5mm}
\begin{equation*}%
  Z_{\mathrm{sm}}(T)=\sum_{x\in X}e^{-\frac{E_x}{k_{\mathrm{B}}T}}.
\end{equation*}
\vspace*{-2.0mm}\\
Thus, based on Replacements~\ref{CS06}, the partition function $Z(T)$
in AIT is defined as
\vspace*{-1mm}
\begin{equation}\label{def_Z(T)}
  Z(T)=\sum_{p\in\Dom U}2^{-\frac{\abs{p}}{T}}.
\end{equation}
\vspace*{-2.0mm}\\
In general, the thermodynamic quantities in AIT are variants of Chaitin $\Omega$ number.
In fact,
in the case of $T=1$, $Z(1)$ is precisely Chaitin $\Omega$ number.%
\footnote{To be precise, the partition function is not a thermodynamic quantity but a statistical mechanical quantity.}

In \cite{T08CiE} we then proved that if the temperature $T$ is a computable real with $0<T<1$ then,
for each of the thermodynamic quantities $Z(T)$, $F(T)$, $E(T)$, $S(T)$, and $C(T)$ in AIT,
the partial randomness of its value equals to $T$,
where the notion of \emph{partial randomness} is a stronger representation of the compression rate by means of program-size complexity.
Thus,
the temperature $T$ plays a role as the partial randomness (and therefore the compression rate) of all the thermodynamic quantities
in the statistical mechanical interpretation of AIT.
In \cite{T08CiE}
we further showed that
the temperature $T$ plays a role as the partial randomness of the temperature $T$ itself,
which is a thermodynamic quantity of itself in thermodynamics or statistical mechanics.
Namely, we proved the \emph{fixed point theorem on partial randomness},%
\footnote{The fixed point theorem on partial randomness is called a fixed point theorem on compression rate in \cite{T08CiE}.}
which states that, for every $T\in(0,1)$, if the value of partition function $Z(T)$ at temperature $T$ is a computable real,
then the partial randomness of $T$ equals to $T$, and therefore the compression rate of $T$ equals to $T$, i.e.,
$\lim_{n\to\infty}H(\rest{T}{n})/n=T$,
where $\rest{T}{n}$ is the first $n$ bits of the base-two expansion of the real $T$.

In our second work \cite{T09LFCS} on the statistical mechanical interpretation of AIT,
we showed that a fixed point theorem of the same form
as for $Z(T)$ holds also for each of $F(T)$, $E(T)$, and $S(T)$.
In the third work \cite{T09ITW}, we further unlocked the properties of
the fixed points on partial randomness
by introducing the notion of composition of prefix-free machines into AIT,
which corresponds to the notion of composition of systems in normal statistical mechanics.
In the work \cite{T10JPCS} we developed a total statistical mechanical interpretation of AIT which attains a perfect correspondence to normal statistical mechanics,
by making an argument on the same level of mathematical strictness as normal statistical mechanics in physics.
We did this by identifying a \emph{microcanonical ensemble} in AIT.
This identification clarifies the meaning of the thermodynamic quantities of AIT.

Our first work
\cite{T08CiE} showed that the values of all the thermodynamic quantities in AIT diverge when the temperature $T$ exceeds $1$.
This phenomenon might be regarded as some sort of \emph{phase transition} in statistical mechanics.
In the work \cite{T12WTCS} we
revealed a computational aspect of the phase transition in AIT.
The notion of \emph{weak truth-table reducibility} plays an important role in recursion theory \cite{N09,DH10}.
In the work \cite{T12WTCS} we introduced an elaboration of this notion, called \emph{reducibility in query size $f$}.
This elaboration enables us to deal with the notion of asymptotic behavior of computation in a manner like in computational complexity theory,
while staying in computability theory.
We applied the elaboration to the relation between $Z(T)$ and $\Dom U$,
where the latter is the set of all halting inputs for the optimal prefix-free machine $U$,
i.e.,
the \emph{halting problem}.
We then revealed the critical difference of the behavior of $Z(T)$ between $T=1$ and $T<1$ in relation to $\Dom U$.
Namely, we revealed the phase transition between the \emph{unidirectionality} at $T=1$ and the \emph{bidirectionality} at $T<1$
in the reduction between $Z(T)$ and $\Dom U$.
This critical phenomenon cannot be captured by the original notion of weak truth-table reducibility.

In this paper, we reveal another computational aspect of the phase transition in AIT between $T=1$ and $T<1$.
We introduce the notion of \emph{strong predictability} for an infinite binary sequence.
Let $X=b_1b_2b_3\dotsc$ be an infinite binary sequence with each $b_i\in\{0,1\}$.
The strong predictability of $X$ is the existence of the computational procedure which, given any prefix $b_1\dots b_n$ of $X$, can predict the next bit $b_{n+1}$ in $X$
with unfailing accuracy,
where the suspension of an individual prediction for the next bit is allowed to make sure that the whole predictions are error-free.
We introduce three types of strong predictability,
\emph{finite-state strong predictability}, \emph{total strong predictability}, and \emph{strong predictability},
which differ with respect to computational ability.
We apply them to the base-two expansion of $Z(T)$.
On the one hand,
we show that the base-two expansion of $Z(T)$ is not strongly predictable at $T=1$ in the sense of any of these three types of strong predictability.
On the other hand,
we show that it is strongly predictable in the sense of all of the three types in the case where $T$ is computable real with $T<1$.
In this manner, we reveal a new aspect of the phase transition in AIT between $T=1$ and $T<1$.
\vspace*{-1.0mm}

\section{Preliminaries}
\label{preliminaries}

We start with some notation and definitions which will be used in this paper.
For any set $S$ we denote by $\#S$ the cardinality of $S$.
$\N=\left\{0,1,2,3,\dotsc\right\}$ is the set of natural numbers, and $\N^+$ is the set of positive integers.
$\Q$ is the set of rationals, and $\R$ is the set of reals.
$\X =
\left\{
   \lambda,0,1,00,01,10,11,000,\dotsc
\right\}$
is the set of finite binary strings, where $\lambda$ denotes the \textit{empty string}, and $\X$ is ordered as indicated.
We identify any string in $\X$ with a natural number in this order.
For any $x\in \X$, $\abs{x}$ is the \textit{length} of $x$.
A subset $S$ of $\X$ is called \emph{prefix-free} if no string in $S$ is a prefix of another string in $S$.

We denote by $\XI$ the set of infinite binary sequences, where an infinite binary sequence is infinite to the right but finite to the left.
Let $X\in\XI$.
For any $n\in\N^+$, we denote the $n$th bit of $X$ by $X(n)$.
For any $n\in\N$, we denote the first $n$ bits of  $X$ by $\rest{X}{n}\in\X$.
Namely, $\rest{X}{0}=\lambda$, and $\rest{X}{n}=X(1)X(2)\dots X(n)$ for every $n\in\N^+$.

For any real $\alpha$, we denote by $\lfloor \alpha \rfloor$ the greatest integer less than or equal to $\alpha$.
When we mention a real $\alpha$ as an infinite binary sequence,
we are considering the base-two expansion of the fractional part $\alpha - \lfloor \alpha \rfloor$ of the real $\alpha$ with infinitely many zeros.
Thus, for any real $\alpha$, $\rest{\alpha}{n}$ and $\alpha(n)$ denote $\rest{X}{n}$ and $X(n)$, respectively,
where $X$ is the unique infinite binary sequence such that $\alpha - \lfloor \alpha \rfloor=0.X$ and $X$ contains infinitely many zeros.

A function $f\colon\N\to\X$ or $f\colon\N\to\Q$ is called \emph{computable}
if there exists a deterministic Turing machine which on every input $n\in\N$ halts and outputs $f(n)$.
A real $\alpha$ is called \emph{computable} if there exists a computable function $f\colon\N\to\Q$ such that $\abs{\alpha-f(n)} < 2^{-n}$ for all $n\in\N$.
We say that $X\in\XI$ is \emph{computable} if the mapping $\N\ni n\mapsto\rest{X}{n}$ is a computable function,
which is equivalent to that the real $0.X$ in base-two notation is computable.

Let $S$ and $T$ be any sets. We say that $f\colon S\to T$ is a \emph{partial function} if $f$ is a function whose domain is a subset of $S$ and whose range is $T$.
The domain of a partial function $f\colon S\to T$ is denoted by $\Dom f$.
A \emph{partial computable function} $f\colon\X\to\X$ is a partial function $f\colon\X\to\X$ for which
there exists a deterministic Turing machine $M$ such that (i) on every input $x\in\X$, $M$ halts if and only of $x\in\Dom f$,
and (ii) on every input $x\in\Dom f$, $M$ outputs $f(x)$.
We write ``c.e.'' instead of ``computably enumerable.''

\vspace*{-1.5mm}

\subsection{Algorithmic Information Theory}
\label{ait}

\vspace*{-0.5mm}

In the following we concisely review some definitions and results of
AIT
\cite{C75,C87b,N09,DH10}.
A \emph{prefix-free machine} is a partial computable function $M\colon \X\to \X$ such that $\Dom M$ is
prefix-free.
For each prefix-free machine $M$ and each $x\in \X$, $H_M(x)$ is defined by
$H_M(x) =
\min
\left\{\,
  \abs{p}\,\big|\;p \in \X\>\&\>M(p)=x
\,\right\}$
(may be $\infty$).
A prefix-free machine $U$ is called \emph{optimal} if for each prefix-free machine $M$ there exists $d\in\N$ with the following property;
if $p\in\Dom M$, then there is $q\in\Dom U$ for which $U(q)=M(p)$ and $\abs{q}\le\abs{p}+d$.
It is then easy to see that there exists an optimal prefix-free machine.
We choose a particular optimal prefix-free machine $U$ as the standard one for use, and define $H(x)$ as $H_U(x)$, which is referred to as
the \emph{program-size complexity} of $x$ or the \emph{Kolmogorov complexity} of $x$.

Chaitin~\cite{C75} introduced $\Omega$ number by
$\Omega=\sum_{p\in\Dom U}2^{-\abs{p}}$.
Since $\Dom U$ is prefix-free, $\Omega$ converges and $0<\Omega\le 1$.
For any $X\in\XI$, we say that $X$ is \emph{weakly Chaitin random} if there exists $c\in\N$ such that $n-c\le H(\rest{X}{n})$ for all $n\in\N^+$ \cite{C75,C87b}.
Chaitin \cite{C75} showed that $\Omega$ is weakly Chaitin random.
Therefore $0<\Omega<1$.

\vspace*{-1.5mm}

\subsection{Partial Randomness}
\label{PR}

\vspace*{-0.5mm}

In the work \cite{T02}, we generalized the notion of the randomness of a real
so that the \emph{partial randomness} of a real can be characterized by a real $T$ with $0\le T\le 1$ as follows.

\vspace*{-1mm}

\begin{definition}[Tadaki~\cite{T02}]
Let $T\in[0,1]$ and let $X\in\XI$.
We say that $X$ is \emph{weakly Chaitin $T$-random} if
there exists $c\in\N$ such that, for all $n\in\N^+$, $Tn-c \le H(\rest{X}{n})$.
\hfill\IEEEQED
\end{definition}

\vspace*{-1mm}

In the case of $T=1$, the weak Chaitin $T$-randomness results in the weak Chaitin randomness.

\begin{definition}[Tadaki~\cite{T12WTCS}]
Let $T\in[0,1]$ and let $X\in\XI$.
We say that $X$ is \emph{strictly $T$-compressible} if there exists $d\in\N$ such that, for all $n\in\N^+$, $H(\rest{X}{n})\le Tn+d$.
We say that $X$ is \emph{strictly Chaitin $T$-random} if $X$ is both weakly Chaitin $T$-random and strictly $T$-compressible. 
\hfill\IEEEQED
\end{definition}

\vspace*{-1mm}

In the work \cite{T02}, we generalized Chaitin $\Omega$ number to $Z(T)$ as follows.
For each real $T>0$,
the \textit{partition function} $Z(T)$ at temperature $T$ is defined by
the equation \eqref{def_Z(T)}.
Thus, $Z(1)=\Omega$.
If $0<T\le 1$, then $Z(T)$ converges and $0<Z(T)<1$, since $Z(T)\le \Omega<1$.
The following theorem holds for $Z(T)$.

\begin{theorem}[Tadaki \cite{T02,T12WTCS}]\label{ZVTwCTr-strictTcb}
Let $T\in\R$.
\begin{enumerate}
\vspace*{-1mm}
  \item If $0<T<1$ and $T$ is computable, then $Z(T)$ is strictly Chaitin $T$-random.
  \item If $1<T$, then $Z(T)$ diverges to $\infty$.\hfill\IEEEQED
\end{enumerate}
\end{theorem}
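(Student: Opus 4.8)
The plan is to treat the divergence statement~(ii) first, since it is self-contained, and then to split the partial-randomness statement~(i) into its two constituent inequalities, handling the lower bound (weak Chaitin $T$-randomness) before the upper bound (strict $T$-compressibility).

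For (ii) I would argue by a counting/incompressibility estimate. Writing $s=1/T$, note $s<1$ because $T>1$. The idea is to exhibit, for arbitrarily large $m$, a large family of short halting programs whose contributions to $Z(T)$ already exceed any prescribed bound. Since there are fewer than $2^{m-1}$ programs of length at most $m-2$, there are fewer than $2^{m-1}$ strings $x$ with $H(x)\le m-2$, so at least $2^{m-1}$ strings of length $m$ satisfy $H(x)\ge m-1$; each such $x$ has a shortest $U$-program $p_x\in\Dom U$ with $\abs{p_x}=H(x)$, and the standard bound $H(x)\le m+2\log m+O(1)$ keeps these lengths below $m+2\log m+O(1)$. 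As the $p_x$ are pairwise distinct, their total contribution to $Z(T)$ is at least $2^{m-1}\cdot 2^{-s(m+2\log m+O(1))}=2^{(1-s)m-O(\log m)}$, which tends to $\infty$ with $m$ because $1-s>0$. Hence a finite subset of $\Dom U$ has arbitrarily large sum, and $Z(T)$ diverges.

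For the weak Chaitin $T$-randomness in (i) I would show that an initial segment $\rest{Z(T)}{n}$ encodes the halting problem of $U$ up to length $\lfloor Tn\rfloor$. Because $T$ is computable, the lower approximation $\sigma=\sum 2^{-\abs{p}/T}$ over the enumerated elements of $\Dom U$ is computable and increases to $Z(T)$; and from $\rest{Z(T)}{n}$ one knows $Z(T)$ to within $2^{-n}$, so one may run this enumeration until $\sigma$ exceeds $0.\rest{Z(T)}{n}$, at which point the remaining mass is below $2^{-n}$. Any halting program $p$ not yet seen then contributes less than $2^{-n}$, forcing $\abs{p}>Tn$; thus every halting program of length at most $\lfloor Tn\rfloor$ has been found, and we can compute the set of all $x$ with $H(x)\le\lfloor Tn\rfloor$. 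The first string outside this set has complexity exceeding $\lfloor Tn\rfloor$, yet it is produced from $\rest{Z(T)}{n}$ by a fixed algorithm, so $H(\rest{Z(T)}{n})\ge Tn-O(1)$, which is exactly weak Chaitin $T$-randomness.

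The strict $T$-compressibility is where I expect the real difficulty, since one must produce, for every $n$, a $U$-description of $\rest{Z(T)}{n}$ of length at most $Tn+d$ for a \emph{single} constant $d$, leaving no $O(\log n)$ slack for separately encoding $n$. The approach I would take is to use that $Z(T)$ is the limit of the computable nondecreasing rational sequence $\sigma_k\nearrow Z(T)$ above and to invoke the Kraft--Chaitin (instantaneous-code) theorem: build a c.e. set of requests $(\ell,s)$ with $\sum 2^{-\ell}\le1$ that, for each $n$, eventually contains a pair $(\lceil Tn\rceil+O(1),\rest{Z(T)}{n})$. The main obstacle is the weight accounting: a naive scheme reissues a level-$n$ request every time the approximation crosses a multiple of $2^{-n}$, and these revisions make the total weight diverge, since a single request at level $n$ already carries weight $2^{-Tn}$, which is heavier than the $2^{-n}$ contribution of a program that can influence the $n$th bit. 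The crux is therefore to charge requests not to grid-crossings but to the individual programs driving the approximation: a halting $p$ of length $\ell$ raises $\sigma_k$ by $2^{-\ell/T}$, and one associates to it a request to describe $\rest{Z(T)}{n}$ at the level $n\approx\ell/T$ with code length $\approx Tn\approx\ell$, so that the total weight of requests is comparable to $\sum_{p\in\Dom U}2^{-\abs{p}}=\Omega<1$ and Kraft--Chaitin applies. Aligning the grid levels $n$ with the available program lengths so that the resulting code lengths come out as $Tn+O(1)$ rather than $Tn+o(n)$ is the technical heart of the argument and the step I would spend the most care on; computability of $T$ is used throughout to keep $\sigma_k$ and the request lengths effectively computable.
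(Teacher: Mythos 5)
The paper itself contains no proof of this theorem; it is imported from \cite{T02} and \cite{T12WTCS}, so your argument can only be measured against the proofs in those sources. Your part (ii) is correct and complete: the counting argument yielding, for each $m$, at least $2^{m-1}$ pairwise distinct minimal programs of lengths between $m-1$ and $m+2\log m+O(1)$ gives finite partial sums of order $2^{(1-1/T)m-O(\log m)}\to\infty$, which is the standard divergence proof. Your proof of weak Chaitin $T$-randomness is likewise the standard Chaitin-style argument from \cite{T02}: from $\rest{Z(T)}{n}$ one enumerates $\Dom U$ until the residual mass drops below $2^{-n}$, which captures every halting program of length at most $\lfloor Tn\rfloor$ and hence produces a string of complexity exceeding $Tn-O(1)$. (Two routine points you should still discharge: $2^{-\abs{p}/T}$ is irrational, so you must work with uniformly computable rational lower approximations; and you need the running sum to actually exceed $0.\rest{Z(T)}{n}$ at a finite stage, which requires ruling out that $Z(T)$ is a dyadic rational.)

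The genuine gap is in strict $T$-compressibility, and you have correctly located it but not closed it. The scheme you gesture at---when a program $p$ of length $\ell$ is enumerated, issue one Kraft--Chaitin request of length $\approx\ell$ for the first $\approx\ell/T$ bits of the current approximation---does keep the total weight near $\Omega$, but it does not guarantee that a \emph{correct} request is ever issued at level $n$: the requested string is a prefix of the current partial sum, programs enumerated later can still change those bits, and all programs of length $\approx Tn$ may well appear before bit $n$ stabilizes, in which case every level-$n$ request is for a wrong string and $\rest{Z(T)}{n}$ never receives a short description. The route taken in \cite{T12WTCS} (cf.\ also \cite{CHS11}) is to isolate this as a separate lemma: call an increasing computable rational approximation $(a_k)$ of $\alpha$ \emph{$T$-convergent} if $\sum_k(a_{k+1}-a_k)^T<\infty$, and prove that every left-computable real with a $T$-convergent approximation satisfies $H(\rest{\alpha}{n})\le Tn+O(1)$; the theorem then follows because $Z(T)$ is $T$-convergent, as $\sum_{p\in\Dom U}\bigl(2^{-\abs{p}/T}\bigr)^T=\Omega<1$. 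The weight accounting in that lemma is arranged so that requests re-issued after late changes of the first $n$ bits are still paid for by the convergent series $\sum_k(a_{k+1}-a_k)^T$; this is precisely the mechanism your sketch lacks, and until you supply it (or an equivalent one guaranteeing a correct request of length $Tn+O(1)$ for every $n$), the upper-bound half of (i) is not proved.
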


\vspace*{-1mm}

This theorem shows some aspect of the phase transition of the behavior of $Z(T)$
when the temperature $T$ exceeds $1$.

\vspace*{-1.0mm}

\subsection{Martingales}
\label{MG}

\vspace*{-0.5mm}

In this subsection we review the notion of \emph{martingale}.
Compared with the notion of strong predictability which is introduced in this paper, the predictability based on martingale is weak one.
We refer the reader to
Nies \cite[Chapter 7]{N09}
for the notions and results of this subsection.

A martingale $B$ is a betting strategy.
Imagine a gambler in a casino is presented with prefixes of an infinite binary sequence $X$ in ascending order.
So far she has been seen a prefix $x$ of $X$, and her current capital is $B(x)\ge 0$.
She bets an amount $\alpha$ with $0\le \alpha\le B(x)$ on her prediction that the next bit will be $0$, say.
Then the bit is revealed. If she was right, she wins $\alpha$, else she loses $\alpha$.
Thus, $B(x0)=B(x)+\alpha$ and $B(x1)=B(x)-\alpha$, and hence $B(x0)+B(x1)=2 B(x)$.
The same considerations apply if she bets that the next bit will be $1$.
These considerations result in the following definition.

\begin{definition}[Martingale]
A \emph{martingale} is a function $B\colon\X\to[0,\infty)$
such that $B(x0)+B(x1)=2 B(x)$ for every $x\in\X$.
For any $X\in\XI$, we say that the martingale $B$ \emph{succeeds} on $X$ if
the capital it reaches along $X$ is unbounded, i.e., $\sup\{B(\rest{X}{n})\mid n\in\N\}=\infty$.
\hfill\IEEEQED
\end{definition}

\vspace*{-1mm}

For any subset $S$ of $\X\times\Q$, we say that $S$ is \emph{computably enumerable} (\emph{c.e.}, for short) if
there exists a deterministic Turing machine $M$ such that, on every input $s\in\X\times\Q$, $M$ halts if and only if $s\in S$.

\begin{definition}[C.E.~Martingale]
A martingale $B$ is called \emph{computably enumerable} if the set $\{(x,q)\in\X\times\Q\mid q<B(x)\}$ is c.e.
\hfill\IEEEQED
\end{definition}

\vspace*{-1mm}

\begin{theorem}
For every $X\in\XI$, no c.e.~martingale succeeds on $X$ if and only if $X$ is weakly Chaitin random.
\hfill\IEEEQED
\end{theorem}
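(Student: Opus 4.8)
The plan is to establish the biconditional by proving its two implications separately, each in contrapositive form, via the classical correspondence between martingales and measures on the binary tree together with the Kraft--Chaitin theorem of AIT. With a c.e.\ martingale $B$ I associate the tree measure $\nu(x)=2^{-\abs{x}}B(x)$: the identity $B(x0)+B(x1)=2B(x)$ becomes $\nu(x)=\nu(x0)+\nu(x1)$, the set $\{(x,q)\in\X\times\Q\mid q<B(x)\}$ is c.e.\ exactly when $\nu$ is lower semicomputable, and for every prefix-free $A\subseteq\X$ one has $\sum_{x\in A}\nu(x)\le\nu(\lambda)=B(\lambda)<\infty$. (If $B(\lambda)=0$ then $B\equiv0$ succeeds on no sequence, so whenever $B$ succeeds we have $B(\lambda)>0$.)

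First I would show that if a c.e.\ martingale $B$ succeeds on $X$ then $X$ is not weakly Chaitin random. For each $k\in\N^+$ let $M_k$ be the prefix-free set of strings at which $B$ first reaches $2^k$ along the tree; a stagewise approximation of the lower-semicomputable $B$, processing short strings first, renders $M_k$ c.e. For $x\in M_k$ we have $\nu(x)\ge 2^{k-\abs{x}}$, so by prefix-freeness $\sum_{x\in M_k}2^{-\abs{x}}\le 2^{-k}B(\lambda)$. I would then submit to the Kraft--Chaitin theorem the requests $(\abs{x}-k+2\lfloor\log(k+1)\rfloor+c_0,\,x)$ for $k\ge1$ and $x\in M_k$; since $B(\lambda)$ is a finite constant, a sufficiently large integer $c_0$ makes their total weight at most $B(\lambda)\,2^{-c_0}\sum_{k\ge1}(k+1)^{-2}\le 1$, yielding a prefix-free machine and hence $H(x)\le\abs{x}-k+2\log k+O(1)$ for every $x\in M_k$. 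Because $B$ succeeds on $X$, for each $k$ some prefix $\rest{X}{n_k}$ of $X$ lies in $M_k$, and $n_k\to\infty$ (only finitely many prefixes have bounded length, on which $B$ is bounded). Thus $H(\rest{X}{n_k})-n_k\to-\infty$, which contradicts the existence of a single $c$ with $n-c\le H(\rest{X}{n})$ for all $n$; so $X$ is not weakly Chaitin random.

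For the converse I would build, from complexity deficiency, a c.e.\ martingale that succeeds on $X$. The quantity $\mu(y)=2^{-H(y)}$ is lower semicomputable (approximate $H(y)$ from above by dovetailing $U$) and $\sum_{y}\mu(y)\le\Omega\le1$ by the Kraft inequality on $\Dom U$. Set $d(x)=2^{\abs{x}}\sum_{y\succeq x}\mu(y)$; a direct computation gives $d(x0)+d(x1)=2d(x)-2^{\abs{x}+1}\mu(x)\le2d(x)$, so $d$ is a lower-semicomputable supermartingale with $d(\lambda)\le1$, and the single term $y=\rest{X}{n}$ gives $d(\rest{X}{n})\ge2^{\,n-H(\rest{X}{n})}$. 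If $X$ is not weakly Chaitin random then $n-H(\rest{X}{n})$ is unbounded above, whence $\sup_n d(\rest{X}{n})=\infty$ and $d$ succeeds on $X$. I would finish by replacing the c.e.\ supermartingale $d$ with a c.e.\ martingale of the same (or larger) success set, using the standard slack-redistribution (``savings'') construction.

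I expect the main obstacle to be in the converse direction, namely the passage from a c.e.\ \emph{supermartingale} to a c.e.\ \emph{martingale} without losing success on $X$, since the theorem is phrased for martingales; verifying that $d$ is genuinely lower semicomputable is routine but must be checked. In the forward direction the only delicate points are the stagewise bookkeeping that makes each level-crossing set $M_k$ c.e.\ and the logarithmic correction guaranteeing that the Kraft--Chaitin weights sum to at most $1$ despite the unknown constant $B(\lambda)$; both are standard once set up correctly.
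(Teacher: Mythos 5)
The paper offers no proof of this theorem: it is quoted as a classical result (the Levin--Schnorr theorem in martingale form) with a pointer to Nies, Chapter~7, so your argument can only be measured against the standard literature proof, which it essentially reconstructs. Your forward direction is sound: the level-crossing antichains $M_k$, the bound $\sum_{x\in M_k}2^{-\abs{x}}\le 2^{-k}B(\lambda)$ from the martingale inequality, and the Kraft--Chaitin requests with the $2\log k$ correction do give $H(\rest{X}{n_k})-n_k\to-\infty$ along any sequence on which $B$ succeeds, refuting weak Chaitin randomness. The delicate points you flag there (making each $M_k$ c.e.\ by processing shorter strings first, absorbing the unknown constant $B(\lambda)$ into $c_0$, and keeping the requested lengths nonnegative via $B(x)\le 2^{\abs{x}}B(\lambda)$) are exactly the right ones and are routinely dispatched.

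The one genuine soft spot is the final step of the converse. The ``savings'' construction is the wrong tool: it converts a succeeding (super)martingale into one whose capital tends to infinity along the sequence; it does not turn a c.e.\ supermartingale into a c.e.\ martingale. Worse, the generic slack-redistribution fails for an arbitrary c.e.\ supermartingale $d$, because the per-node slack $2d(x)-d(x0)-d(x1)$ is a difference of lower semicomputable quantities and need not be lower semicomputable, so the redistributed object is not obviously c.e. Your particular $d$ is precisely the case where this can be repaired: its slack at $x$ equals $2^{\abs{x}+1-H(x)}$, which \emph{is} lower semicomputable, and
$f(x)=2^{\abs{x}}\sum_{y\succeq x}2^{-H(y)}+\sum_{y\prec x}2^{\abs{y}-H(y)}$
(the second sum over proper prefixes $y$ of $x$) is a c.e.\ martingale with $f\ge d$ and $f(\lambda)\le\Omega\le 1$; additivity is a two-line computation from $d(x0)+d(x1)=2d(x)-2^{\abs{x}+1-H(x)}$. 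Alternatively, bypass supermartingales entirely: letting $V_k$ be the open set generated by $\{y\in\X : H(y)\le\abs{y}-k\}$, one has $\mathcal{L}(V_k)\le 2^{-k}$, each $B_k(x)=2^{\abs{x}}\mathcal{L}([x]\cap V_k)$ is a c.e.\ martingale because Lebesgue measure splits additively over $[x]=[x0]\cup[x1]$, and $B=\sum_k B_k$ succeeds on every $X$ that is not weakly Chaitin random. Either repair closes the argument; as written, the appeal to a ``standard'' supermartingale-to-martingale conversion is the one step that does not go through for the reason stated.
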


\vspace*{-2mm}

For any subset $S$ of $\X\times\Q$, we say that $S$ is \emph{computable} if
there exists a deterministic Turing machine $M$ such that, on every input $s\in\X\times\Q$,
(i) $M$ halts and (ii) $M$ outputs $1$ if $s\in S$ and $0$ otherwise.

\begin{definition}[Computable Randomness]
A martingale $B$ is called \emph{computable} if
the set $\{(x,q)\in\X\times\Q\mid q<B(x)\}$ is computable.
For any $X\in\XI$, we say that $X$ is \emph{computably random} if no computable martingale succeeds on $X$.
\hfill\IEEEQED
\end{definition}

\begin{definition}[Partial Computable Martingale]
A \emph{partial computable martingale} is a partial computable function $B\colon\X\to\Q\cap[0,\infty)$ such that
$\Dom B$ is closed under prefixes, and for each $x\in\Dom B$, $B(x0)$ is defined iff $B(x1)$ is defined,
in which case
$B(x0)+B(x1)=2 B(x)$ holds.
\hfill\IEEEQED
\end{definition}

\begin{definition}[Partial Computable Randomness]
Let $B$ be a partial computable martingale and $X\in\XI$.
We say that $B$ \emph{succeeds} on $X$ if $B(\rest{X}{n})$ is defined for all $n\in\N$ and $\sup\{B(\rest{X}{n})\mid n\in\N\}=\infty$.
We say that $X$ is \emph{partial computably random} if no partial computable martingale succeeds on $X$.
\hfill\IEEEQED
\end{definition}

\begin{theorem}\label{wCpcr_pcrcr}
Let $X\in\XI$.
\vspace*{-2mm}
\begin{enumerate}
  \item If $X$ is weakly Chaitin random then $X$ is partial computably random.
  \item If $X$ is partial computably random then $X$ is computably random.\hfill\IEEEQED
\end{enumerate}
\end{theorem}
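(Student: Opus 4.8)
The plan is to prove both implications in their contrapositive forms, and for item~(i) to route the argument through the characterization of weak Chaitin randomness by c.e.\ martingales stated above. For (i) it suffices to show that if some partial computable martingale $B$ succeeds on $X$, then some c.e.\ martingale also succeeds on $X$; the stated theorem then forces $X$ to fail to be weakly Chaitin random, which is exactly the contrapositive of (i). For (ii) I would show that if some computable (real-valued) martingale succeeds on $X$, then some rational-valued \emph{total} --- hence partial --- computable martingale succeeds on $X$, which is the contrapositive of (ii).

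For (i), assume $B$ is a partial computable martingale succeeding on $X$; dividing by the rational $B(\lambda)$, I may assume $B(\lambda)=1$. For each $k\in\N$ let $S_k$ be the set of $x\in\Dom B$ with $B(x)\ge 2^k$ while $B(y)<2^k$ for every proper prefix $y$ of $x$ (the first level-$2^k$ crossings inside $\Dom B$). Each $S_k$ is prefix-free and, by simulating $B$ on all strings, uniformly c.e. The martingale maximal inequality gives $\sum_{x\in S_k}2^{-\abs{x}}\le 2^{-k}$, since $\sum_{x\in S_k}B(x)2^{-\abs{x}}\le B(\lambda)=1$ for the prefix-free antichain $S_k\subseteq\Dom B$ and each $B(x)\ge 2^k$. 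Let $U_k=\{Z\in\XI\mid Z\text{ extends some }x\in S_k\}$ and define $d_k(y)=2^{\abs{y}}\mu(U_k\cap[y])$, the martingale generated by the measure concentrated on $U_k$. As $S_k$ is enumerated, $\mu(U_k\cap[y])$ only increases, so $d_k$ is lower semicomputable uniformly in $k$, and $d_k(\lambda)=\mu(U_k)\le 2^{-k}$. Hence $M=\sum_k d_k$ is a c.e.\ martingale with $M(\lambda)\le\sum_k 2^{-k}<\infty$. Since $B$ succeeds on $X$ and is defined on every $\rest{X}{n}$, for each $k$ the sequence $X$ has a prefix in $S_k$, so $X\in\bigcap_k U_k$; along $X$ each $d_k$ eventually equals $1$, whence $M(\rest{X}{n})\to\infty$, completing (i).

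For (ii), assume a computable martingale $d$ succeeds on $X$; since its cut set is computable, $d$ is a computable real-valued function and $d(\rest{X}{n})>0$ for all $n$ (a martingale reaching $0$ stays there and cannot succeed). I would build a rational-valued total computable martingale $\tilde d$ by approximating the betting fraction at each node while never wagering the whole capital. At a node $x$ with $\tilde d(x)$ already a nonnegative rational, compute a rational approximation $\hat\beta$ of the fraction $\beta(x)=(d(x0)-d(x))/d(x)$ (setting $\beta=0$ when $d(x)=0$, which is decidable), clamp it into $[-(1-2^{-\abs{x}}),\,1-2^{-\abs{x}}]$ to obtain a rational $\beta'$, and set $\tilde d(x0)=\tilde d(x)(1+\beta')$ and $\tilde d(x1)=\tilde d(x)(1-\beta')$. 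These are nonnegative rationals summing to $2\tilde d(x)$, so $\tilde d$ is a total rational-valued computable martingale, and as such a partial computable martingale. Comparing log-capitals along $X$, on steps where $d$ loses capital the clamped bet makes $\tilde d$ lose no more, while on steps where $d$ gains the clamping and rounding decrease $\log\tilde d(\rest{X}{n})-\log d(\rest{X}{n})$ by at most $O(2^{-n})$; summing these bounds gives $\log\tilde d(\rest{X}{n})\ge\log d(\rest{X}{n})-C$ for a constant $C$. Hence $\tilde d(\rest{X}{n})$ is unbounded because $d(\rest{X}{n})$ is, so $\tilde d$ succeeds on $X$, completing (ii).

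The main obstacle I anticipate is the effectivity bookkeeping in these conversions rather than the underlying martingale estimates. In (i) the delicate point is that $B$ is only a partial function, so the crossing sets $S_k$ and the induced martingales $d_k$ must be made lower semicomputable purely from the enumeration of halting computations of $B$; using the measure-generated form $d_k(y)=2^{\abs{y}}\mu(U_k\cap[y])$ is what makes lower semicomputability automatic, since enumerating $S_k$ only increases $\mu(U_k\cap[y])$. In (ii) the delicate point is to keep $\tilde d$ simultaneously rational-valued, exactly a martingale, everywhere nonnegative, and still successful; the device of never betting the full fraction (capping at $1-2^{-\abs{x}}$) reconciles these constraints, guaranteeing totality and positivity while costing only a summable amount of capital along $X$.
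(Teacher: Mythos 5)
The paper itself offers no proof of this theorem: it is quoted from the literature (the surrounding text defers to Nies, Chapter 7), so I am judging your argument against the standard proofs. Your part (i) is correct and is essentially the standard route: from a successful partial computable martingale $B$ with $B(\lambda)=1$ you extract the uniformly c.e.\ prefix-free ``first crossing'' sets $S_k\subseteq\Dom B$, use Kolmogorov's inequality $\sum_{x\in S_k}B(x)2^{-\abs{x}}\le B(\lambda)$ (which does hold for partial martingales, since $\Dom B$ is closed under prefixes and children are defined in pairs) to get $\mu(U_k)\le 2^{-k}$, and sum the induced measure martingales into a c.e.\ martingale that succeeds on $X$; the cited equivalence with weak Chaitin randomness then closes the contrapositive.

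Part (ii), however, has a genuine gap. Your success analysis rests on the claim that on losing steps the approximate bet costs nothing extra, and that otherwise the log-capital discrepancy per step is $O(2^{-n})$ and hence summable. Both fail when the capital ratio $v=1+\sigma\beta$ at a step is small. The error you make in the betting fraction is \emph{additive} (of size about $2^{-n}$, plus the clamping floor), so the guaranteed bound is only $1+\sigma\beta'\ge\max\bigl(2^{-n},\,v-O(2^{-n})\bigr)$; when $v=\Theta(2^{-n})$ this is only $\ge v/2$, i.e.\ the per-step log loss can be a full bit, not $O(2^{-n})$. A successful computable martingale can perfectly well make such near-wipeout bets along $X$ infinitely often (crash to $O(1)$ capital at step $n_j$, then re-climb to a peak of height $2^{j}$ before the next crash: the supremum is still infinite), and then your bound only gives $\log\tilde d(\rest{X}{n})\ge\log d(\rest{X}{n})-\#\{\text{crashes before }n\}-O(1)$, which does not show $\tilde d$ is unbounded. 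The defect is intrinsic to approximating the betting \emph{fraction}: controlling the multiplicative drift of the capital would require a relative approximation of $v$, which is impossible when $v$ may be arbitrarily close to $0$. The standard fix is to approximate \emph{additively}: set $\tilde d(\lambda)\approx d(\lambda)+2$ and $\tilde d(x0)=\tilde d(x)+\tilde D(x)$, $\tilde d(x1)=\tilde d(x)-\tilde D(x)$ with $\tilde D(x)$ a rational within $2^{-\abs{x}-2}$ of $d(x0)-d(x)$; then $d(x)+1\le\tilde d(x)\le d(x)+3$ for all $x$, so $\tilde d$ is a nonnegative rational-valued total computable martingale that is unbounded along $X$ exactly when $d$ is.
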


\vspace*{-2mm}

The converse direction of each of the implications (i) and (ii) of Theorem~\ref{wCpcr_pcrcr} fails.
\vspace*{-1mm}

\section{Non Strong Predictability at $T=1$}
\label{T=1}

\vspace*{-1.0mm}

The main result in this section is Theorem~\ref{main3}, which shows that partial computable randomness implies non strong predictability.
For intelligibility
we first show an easier result, Theorem~\ref{main2}, which says that computable randomness implies non total strong predictability.

\begin{definition}[Total Strong Predictability]\label{def_tsp}
For any $X\in\XI$, we say that $X$ is
\emph{total strongly predictable} if there exists a computable function $F\colon\X\to\{0,1,N\}$
for which the following two conditions hold:
\begin{enumerate}
\vspace*{-1mm}
\item For every $n\in\N$, if $F(\rest{X}{n})\neq N$ then
  $F(\rest{X}{n})=X(n+1)$.
\item The set $\{n\in\N\mid F(\rest{X}{n})\neq N\}$ is infinite.\hfill\IEEEQED
\end{enumerate}
\end{definition}

\vspace*{-2mm}

In the above definition, the letter $N$ outputted by $F$ on the input $\rest{X}{n}$ means that the prediction of the next bit $X(n+1)$ is suspended. 

\begin{theorem}\label{main2}
For every $X\in\XI$, if $X$ is computably random then $X$ is not total strongly predictable.
\end{theorem}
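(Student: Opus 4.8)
The plan is to prove the contrapositive: if $X$ is total strongly predictable, then $X$ is not computably random, and I would witness this by constructing an explicit computable martingale that succeeds on $X$. Fix a computable $F\colon\X\to\{0,1,N\}$ witnessing total strong predictability. The betting strategy is simply to follow $F$: whenever $F$ ventures a prediction, bet the \emph{entire} current capital on the predicted bit, and whenever $F$ suspends (outputs $N$), leave the capital untouched.

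Concretely, I would define $B\colon\X\to[0,\infty)$ by recursion on length. Set $B(\lambda)=1$, and given $B(x)$, put $B(x0)=2B(x)$, $B(x1)=0$ if $F(x)=0$; put $B(x0)=0$, $B(x1)=2B(x)$ if $F(x)=1$; and put $B(x0)=B(x1)=B(x)$ if $F(x)=N$. In every case $B(x0)+B(x1)=2B(x)$, so $B$ is a martingale. Since each $B(x)$ is a dyadic rational that can be computed from $x$ by reading off $F$ along the prefixes of $x$ and doubling or freezing accordingly, the set $\{(x,q)\in\X\times\Q\mid q<B(x)\}$ is computable, so $B$ is a computable martingale.

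It remains to show $B$ succeeds on $X$. Here condition~(i) does the essential work: whenever $F(\rest{X}{n})\neq N$, the prediction equals $X(n+1)$, so the all-in bet is \emph{won} and $B(\rest{X}{n+1})=2B(\rest{X}{n})$; whenever $F(\rest{X}{n})=N$, we have $B(\rest{X}{n+1})=B(\rest{X}{n})$. Thus the capital $B(\rest{X}{n})$ is nondecreasing in $n$ and doubles at every $n$ with $F(\rest{X}{n})\neq N$. By condition~(ii) the set of such $n$ is infinite, so $\sup\{B(\rest{X}{n})\mid n\in\N\}=\infty$; that is, $B$ succeeds on $X$, and hence $X$ is not computably random.

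The argument is essentially routine, so I do not expect a genuine obstacle; the one point meriting care is the legitimacy of betting the full capital. This is safe precisely because condition~(i) guarantees unfailing accuracy \emph{along} $X$, so the capital never drops along $X$ (it collapses to $0$ only off $X$, which is irrelevant to success). I would flag this as the conceptual crux connecting the error-free-prediction requirement of total strong predictability to the divergence of the martingale's capital.
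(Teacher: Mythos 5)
Your proposal is correct and follows essentially the same route as the paper: prove the contrapositive by converting the predictor $F$ into a computable martingale that bets everything on each ventured prediction and abstains when $F$ outputs $N$, so that the capital along $X$ equals $2$ to the number of predictions made so far and diverges by condition~(ii). The paper's definition of $B$ (via $B(x1)=2B(x)-B(x0)$) is just a compact rewriting of your case split, so the two arguments coincide.
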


\vspace*{-2mm}

\begin{proof}
We show the contraposition of Theorem~\ref{main2}.
For that purpose, suppose that $X$ is total strongly predictable.
Then there exists a computable function $F\colon\X\to\{0,1,N\}$ which satisfies the conditions (i) and (ii) of Definition~\ref{def_tsp}.
We define a function $B\colon\X\to\N$ recursively as follows:
First $B(\lambda)$ is defined as $1$.
Then, for any $x\in\X$, $B(x 0)$ is defined by
\vspace*{-3mm}
\begin{equation*}
B(x 0)=
\left\{
\begin{array}{ll}
B(x) & \text{if }F(x)=N, \\
2B(x) & \text{if }F(x)=0, \\
0 & \text{otherwise},
\end{array}
\right.
\end{equation*}
\vspace*{-1.5mm}\\
and then $B(x 1)$ is defined by
$B(x 1)=2B(x)-B(x 0)$.
It follows that $B\colon\X\to\N$ is a computable function and 
\begin{equation*}
  B(x 0)+B(x 1)=2B(x)
\end{equation*}
for every $x\in\X$.
Thus $B$ is a computable martingale.
On the other hand, it is easy to see that
\vspace*{-1mm}
\begin{equation*}
  B(\rest{X}{n})=2^{\#\{m\in\N\,\mid\,m<n\text{ \& }F(\reste{X}{m})\neq N\}}
\end{equation*}
\vspace*{-3.5mm}\\
for every $n\in\N$.
Since the set $\{n\in\N\mid F(\rest{X}{n})\neq N\}$ is infinite, it follows that
$\lim_{n\to\infty} B(\rest{X}{n})=\infty$.
Therefore, $X$ is not computably random, as desired.
\end{proof}

\begin{definition}[Strong Predictability]\label{def_sp}
For any $X\in\XI$, we say that $X$ is
\emph{strongly predictable} if there exists a partial computable function $F\colon\X\to\{0,1,N\}$ for which
the following three conditions hold:
\begin{enumerate}
\vspace*{-1mm}
\item For every $n\in\N$, $F(\rest{X}{n})$ is defined.
\item For every $n\in\N$, if $F(\rest{X}{n})\neq N$ then
  $F(\rest{X}{n})=X(n+1)$.
\item The set $\{n\in\N\mid F(\rest{X}{n})\neq N\}$ is infinite.\hfill\IEEEQED
\end{enumerate}
\end{definition}

\vspace*{-1mm}

Obviously, the following holds.

\begin{proposition}\label{fact1}
For every $X\in\XI$, if $X$ is total strongly predictable then $X$ is strongly predictable.
\hfill\IEEEQED
\end{proposition}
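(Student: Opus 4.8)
The plan is to observe that the very same function which witnesses total strong predictability already witnesses strong predictability, so no new construction is needed. The only point to verify is that the notion of computable function used in Definition~\ref{def_tsp} is a special case of the notion of partial computable function used in Definition~\ref{def_sp}.

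First I would recall from the preliminaries that a computable function $F\colon\X\to\{0,1,N\}$ is one computed by a deterministic Turing machine that halts on \emph{every} input; in particular, $F$ is a partial computable function whose domain is all of $\X$. Hence, starting from a witness $F$ for total strong predictability of $X$, I regard $F$ as a partial computable function $F\colon\X\to\{0,1,N\}$. Because $F$ is defined on every $x\in\X$, it is in particular defined at every prefix $\rest{X}{n}$, so condition (i) of Definition~\ref{def_sp} holds immediately.

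It then remains to match the two conditions of Definition~\ref{def_tsp} against conditions (ii) and (iii) of Definition~\ref{def_sp}. Condition (i) of Definition~\ref{def_tsp}, which states that $F(\rest{X}{n})=X(n+1)$ whenever $F(\rest{X}{n})\neq N$, is verbatim condition (ii) of Definition~\ref{def_sp}; and condition (ii) of Definition~\ref{def_tsp}, which states that $\{n\in\N\mid F(\rest{X}{n})\neq N\}$ is infinite, is verbatim condition (iii) of Definition~\ref{def_sp}. Thus the single function $F$ satisfies all three conditions for strong predictability, and the proof is complete.

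There is essentially no obstacle here; the statement is immediate once the inclusion ``every total computable function is a partial computable function with full domain'' is made explicit. That inclusion is the only place where a (trivial) argument is invoked rather than a direct appeal to the definitions, so I would state it briefly and then simply read off the three conditions of Definition~\ref{def_sp} from those of Definition~\ref{def_tsp}.
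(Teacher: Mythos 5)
Your proof is correct and matches the paper's intent: the paper states this proposition with ``Obviously, the following holds'' and gives no further argument, precisely because the witnessing function for total strong predictability is already a partial computable function with full domain, so the three conditions of Definition~\ref{def_sp} follow immediately. Your write-up simply makes that observation explicit.
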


\begin{theorem}\label{main3}
For every $X\in\XI$, if $X$ is partial computably random then $X$ is not strongly predictable.
\end{theorem}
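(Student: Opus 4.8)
The plan is to prove the contrapositive, in close parallel with the proof of Theorem~\ref{main2}. Suppose that $X$ is strongly predictable, witnessed by a partial computable function $F\colon\X\to\{0,1,N\}$ satisfying conditions (i)--(iii) of Definition~\ref{def_sp}. From $F$ I would construct a \emph{partial computable} martingale $B$ that succeeds on $X$, thereby contradicting the partial computable randomness of $X$. The intuition is the same as before: since the bets placed according to $F$'s predictions are, by condition (ii), never wrong along $X$, the capital can only stay fixed (when $F$ outputs $N$) or double (when $F$ makes a genuine prediction), and condition (iii) guarantees infinitely many doublings.

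The construction of $B$ imitates the one in Theorem~\ref{main2}, with the totality of $F$ replaced by partiality. I would set $B(\lambda)=1$ and, for any $x\in\X$ at which $B(x)$ is already defined, run the computation of $F(x)$; if and when it halts, I define $B(x0)$ to equal $B(x)$, $2B(x)$, or $0$ according as $F(x)$ equals $N$, $0$, or $1$, and then set $B(x1)=2B(x)-B(x0)$. The values so produced are nonnegative integers, hence lie in $\Q\cap[0,\infty)$, and the defining equation yields $B(x0)+B(x1)=2B(x)$ whenever both are defined.

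The key step is to verify that $B$ is a genuine partial computable martingale in the sense of the relevant definition. First, $B$ is partial computable because the recursion is effective and $F$ is partial computable. Second, $\Dom B$ is closed under prefixes, since $B(x0)$ and $B(x1)$ are introduced only after $B(x)$ has been defined, working outward from $\lambda$. Third, $B(x0)$ is defined if and only if $B(x1)$ is, because both are produced together from the single computation of $F(x)$; thus the definedness clause of a partial computable martingale holds. This definitional bookkeeping---vacuous in Theorem~\ref{main2}, where $F$ and hence $B$ were total---is where the care lies.

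Finally, I would argue that $B$ succeeds on $X$. Condition (i) of Definition~\ref{def_sp}, that $F(\rest{X}{n})$ is defined for every $n$, is exactly what ensures the recursion never stalls along $X$, so $B(\rest{X}{n})$ is defined for all $n\in\N$; this is precisely the extra hypothesis that strong predictability supplies beyond total strong predictability in Definition~\ref{def_tsp}. Using condition (ii) as above, one checks by induction that $B(\rest{X}{n})=2^{\#\{m\in\N\,\mid\,m<n\text{ \& }F(\reste{X}{m})\neq N\}}$, just as in Theorem~\ref{main2}. Since the set in condition (iii) is infinite, the exponent tends to infinity, so $\sup_n B(\rest{X}{n})=\infty$ and $B$ succeeds on $X$. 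Hence $X$ is not partial computably random, completing the contrapositive. I expect the only genuine obstacle to be the verification that $B$ meets the partial-computable-martingale definition; the success computation itself is routine.
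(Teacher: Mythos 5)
Your proposal is correct and follows essentially the same route as the paper's own proof: the contrapositive, the same case-by-case recursive definition of a partial computable martingale $B$ from $F$, the same verification of the three definitional clauses (domain closed under prefixes, paired definedness of $B(x0)$ and $B(x1)$, and the fairness equation), and the same success computation $B(\rest{X}{n})=2^{\#\{m<n\,:\,F(\reste{X}{m})\neq N\}}\to\infty$. No gaps.
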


\vspace*{-2mm}

\begin{proof}
We show the contraposition of Theorem~\ref{main3}.
For that purpose, suppose that $X$ is strongly predictable.
Then there exists a partial computable function $F\colon\X\to\{0,1,N\}$ which satisfies the conditions (i), (ii), and (iii) of Definition~\ref{def_sp}.
We define a partial function $B\colon\X\to\N$ recursively as follows:
First $B(\lambda)$ is defined as $1$.
Then, for any $x\in\X$, $B(x 0)$ is defined by
\begin{equation*}
B(x 0)=
\left\{ \begin{array}{ll}
B(x) & \text{if }F(x)=N, \\
2B(x) & \text{if }F(x)=0, \\
0 & \text{if }F(x)=1, \\
\text{undefined} & \text{if $F(x)$ is undefined},
\end{array} \right.
\end{equation*}
and then $B(x 1)$ is defined by
\begin{equation*}
B(x 1)=
\left\{ \begin{array}{ll}
2B(x)-B(x 0) & \text{if $B(x 0)$ is defined}, \\
\text{undefined} & \text{otherwise}.
\end{array} \right.
\end{equation*}
It follows that $B\colon\X\to\N$ is a partial computable function such that
\vspace*{-1mm}
\begin{enumerate}
\item $\Dom B$ is closed under prefixes,
\item for every $x\in\Dom B$, $x 0\in\Dom B$ if and only if $x 1\in\Dom B$, and
\item for every $x\in\X$, if $x,x 0, x 1\in\Dom B$ then $B(x 0)+B(x 1)=2B(x)$.
\end{enumerate}
\vspace*{-1mm}
Thus $B$ is a partial computable martingale.
On the other hand, it is easy to see that, for every $n\in\N$, $B(\rest{X}{n})$ is defined and
$B(\rest{X}{n})=2^{\#\{m\in\N\,\mid\,m<n\text{ \& }F(\reste{X}{m})\neq N\}}$
Since the set $\{n\in\N\mid F(\rest{X}{n})\neq N\}$ is infinite, it follows that
$\lim_{n\to\infty} B(\rest{X}{n})=\infty$.
Therefore, $X$ is not partial computably random, as desired.
\end{proof}

\begin{theorem}\label{wCrnonsp}
For every $X\in\XI$, if $X$ is weakly Chaitin random then $X$ is not strongly predictable.
\end{theorem}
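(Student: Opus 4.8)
The plan is to obtain Theorem~\ref{wCrnonsp} as an immediate consequence of two results already established above, with no fresh construction required. The key observation is that weak Chaitin randomness sits at the top of the relevant hierarchy of randomness notions: it implies partial computable randomness, which in turn precludes strong predictability. So the whole statement collapses to a two-step composition of previously proved implications.

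Concretely, I would argue as follows. Let $X\in\XI$ be weakly Chaitin random. First I would invoke Theorem~\ref{wCpcr_pcrcr}(i), which tells us that $X$ is then partial computably random. Next I would apply Theorem~\ref{main3}, whose conclusion is precisely that partial computable randomness implies that $X$ is not strongly predictable. Chaining these two implications yields the desired conclusion that a weakly Chaitin random $X$ is not strongly predictable.

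There is essentially no obstacle here, because the substantive work has already been carried out in the proof of Theorem~\ref{main3}: there, from any strong predictor $F$ for $X$ one builds a partial computable martingale $B$ satisfying $B(\rest{X}{n})=2^{\#\{m\in\N\,\mid\,m<n\text{ \& }F(\reste{X}{m})\neq N\}}$, which succeeds on $X$ precisely because the set of non-suspended predictions is infinite. The only additional ingredient is the known separation of randomness notions recorded in Theorem~\ref{wCpcr_pcrcr}(i), which bridges weak Chaitin randomness and partial computable randomness.

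Were Theorem~\ref{wCpcr_pcrcr} not at hand, one would instead reprove the implication directly: take a strong predictor $F$, form the same doubling martingale $B$ as in the proof of Theorem~\ref{main3}, observe that $B$ is in fact a \emph{c.e.} martingale once $F$ is (partial) computable, and then argue that a weakly Chaitin random sequence admits no c.e.\ martingale succeeding on it. Given the stated results, however, the proof reduces to the clean composition described above, and I would present it simply as a corollary of Theorems~\ref{wCpcr_pcrcr} and~\ref{main3}.
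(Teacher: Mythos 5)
Your proposal is correct and matches the paper's own proof exactly: the paper derives Theorem~\ref{wCrnonsp} as an immediate consequence of (i) of Theorem~\ref{wCpcr_pcrcr} together with Theorem~\ref{main3}, precisely the two-step composition you describe.
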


\vspace*{-2mm}

\begin{proof}
The result follows immediately from (i) of Theorem~\ref{wCpcr_pcrcr} and Theorem~\ref{main3}.
\end{proof}

Thus, since $Z(1)$, i.e., $\Omega$, is weakly Chaitin random, we have the following.

\begin{theorem}\label{ZT=1nonsp}
$Z(1)$ is not strongly predictable.
\hfill\IEEEQED
\end{theorem}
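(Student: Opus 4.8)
The plan is to obtain this statement as an immediate corollary of the machinery already assembled, rather than to argue about the base-two expansion of $Z(1)$ directly. The first step is the elementary observation that $Z(1)=\Omega$: substituting $T=1$ into the defining equation \eqref{def_Z(T)} gives $Z(1)=\sum_{p\in\Dom U}2^{-\abs{p}}$, which is exactly Chaitin's $\Omega$ number. This reduces the problem about $Z(1)$ to a problem about $\Omega$, for which the relevant randomness property is already on record.

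The second step is to recall the two facts that, when chained together, close the argument. On the one hand, Chaitin's classical result (reviewed in Subsection~\ref{ait}) states that $\Omega$ is weakly Chaitin random. On the other hand, Theorem~\ref{wCrnonsp} asserts that for every $X\in\XI$, weak Chaitin randomness implies that $X$ is not strongly predictable. Applying Theorem~\ref{wCrnonsp} to $X=Z(1)=\Omega$ therefore yields that $Z(1)$ is not strongly predictable, which is the claim. No further computation is required, so the proof is a single line of citation.

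I would not expect any genuine obstacle at this stage: all the difficulty has been front-loaded into the earlier results. The real content lies in Theorem~\ref{main3}, whose proof converts a hypothetical strong predictor $F$ into a partial computable martingale $B$ that doubles its capital at each of infinitely many predicted positions and hence succeeds on $X$; combined with part~(i) of Theorem~\ref{wCpcr_pcrcr} (weak Chaitin randomness implies partial computable randomness), this is precisely what furnishes Theorem~\ref{wCrnonsp}. Given those, the passage to $Z(1)$ is purely a matter of instantiation, and the statement is best presented as a short corollary-style remark rather than an independently proved theorem.
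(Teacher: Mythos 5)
Your proposal is correct and matches the paper exactly: the paper derives this theorem in one line from the identity $Z(1)=\Omega$, Chaitin's result that $\Omega$ is weakly Chaitin random, and Theorem~\ref{wCrnonsp}. No differences to report.
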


\section{Strong Predictability on $T<1$}
\label{T<1}

In this section, we introduce the notion of \emph{finite-state strong predictability}.
For that purpose, we first introduce the notion of \emph{finite automaton with outputs}.
This is just a deterministic finite automaton
whose output is determined, depending only on its final state.
The formal definitions are as follows.

\begin{definition}[Finite Automaton with Outputs]\label{def_FAO}
A finite automaton with outputs is a $6$-tuple $(Q,\Sigma,\delta,q_0,\Gamma,f)$, where
\vspace*{-1mm}
\begin{enumerate}
\item $Q$ is a finite set called the \emph{states},
\item $\Sigma$ is a finite set called the \emph{input alphabet},
\item $\delta\colon Q\times\Sigma\to Q$ is the \emph{transition function},
\item $q_0\in Q$ is the \emph{initial state},
\item $\Gamma$ is a finite set called the \emph{output alphabet}, and
\item $f\colon Q\to\Gamma$ is the \emph{output function from final states}.\hfill\IEEEQED
\end{enumerate}
\end{definition}

\vspace*{-1mm}

A finite automaton with outputs computes as follows. 

\begin{definition}\label{Comp_FAO}
Let $M=(Q,\Sigma,\delta,q_0,\Gamma,f)$ be a finite automaton with outputs.
For every $x=x_1x_2\dots x_n\in\Sigma^*$ with each $x_i\in\Sigma$,
the output of $M$ on the input $x$, denoted $M(x)$, is $y\in\Gamma$ for which there exist $q_1,q_2,\dots,q_n\in Q$ such that
\vspace*{-1mm}
\begin{enumerate}
\item $q_{i}=\delta(q_{i-1},x_{i})$ for every $i\in\{1,2,\dots,n\}$, and
\item $y=f(q_n)$.\hfill\IEEEQED
\end{enumerate}
\end{definition}

\vspace*{-1mm}

In Definitions~\ref{def_FAO} and \ref{Comp_FAO},
if we set $\Gamma=\{0,1\}$, the definitions result in those of a normal deterministic finite automaton and its computation,
where $M(x)=1$ means that $M$ accepts $x$ and $M(x)=0$ means that $M$ rejects $x$.

\begin{definition}[Finite-State Strong Predictability]\label{def_fssp}
For any $X\in\XI$, we say that $X$ is
\emph{finite-state strongly predictable} if there exists a finite automaton with outputs
$M=(Q,\X,\delta,q_0,\{0,1,N\},f)$
for which the following two conditions hold:
\vspace*{-1mm}
\begin{enumerate}
\item For every $n\in\N$, if $M(\rest{X}{n})\neq N$ then
  $M(\rest{X}{n})=X(n+1)$.
\item The set $\{n\in\N\mid M(\rest{X}{n})\neq N\}$ is infinite.\hfill\IEEEQED
\end{enumerate}
\end{definition}

Since the computation of every finite automaton can be simulated by some deterministic Turing machine which always halts, the following holds, obviously.

\begin{proposition}\label{fact_fssp_tsp}
For every $X\in\XI$, if $X$ is finite-state strongly predictable then $X$ is total strongly predictable.
\hfill\IEEEQED
\end{proposition}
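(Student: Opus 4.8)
The plan is to take, as the witness for total strong predictability, exactly the function computed by the finite automaton that witnesses finite-state strong predictability; the entire content of the proposition is that a finite automaton is an always-halting device, so the function it computes is total computable.

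First I would unpack the hypothesis. By Definition~\ref{def_fssp} there is a finite automaton with outputs $M$, with finite state set $Q$, input alphabet $\Sigma$, transition function $\delta$, and output function $f\colon Q\to\{0,1,N\}$, satisfying conditions (i) and (ii) of that definition. I would then define $F\colon\X\to\{0,1,N\}$ by $F(x)=M(x)$ for every $x\in\X$, where $M(x)$ is the output of $M$ on the input $x$ in the sense of Definition~\ref{Comp_FAO}.

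The key step is to verify that $F$ is computable in the sense used in the paper, i.e.\ that some deterministic Turing machine halts on every input $x\in\X$ and outputs $F(x)$. This is the standard simulation of a finite automaton by a Turing machine: the machine scans the input once from left to right, keeping the current state of $M$ (an element of the finite set $Q$) recorded in its finite control and updating it via $\delta$ at each scanned symbol; upon reaching the end of the input it outputs $f$ applied to the state it has reached, which by Definition~\ref{Comp_FAO} equals $M(x)$. Since the input is a finite string and each of its symbols is read exactly once, this computation always halts, so $F$ is a total computable function.

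Finally, since $F(\rest{X}{n})=M(\rest{X}{n})$ for every $n\in\N$, conditions (i) and (ii) of Definition~\ref{def_tsp} for the function $F$ are literally conditions (i) and (ii) of Definition~\ref{def_fssp} for the automaton $M$, and these hold by hypothesis. Hence $X$ is total strongly predictable. I do not expect any genuine obstacle: the proposition is routine, and the only point deserving even a sentence of care is the simulation argument establishing the totality and computability of $F$, which is entirely standard.
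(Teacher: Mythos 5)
Your proposal is correct and is exactly the argument the paper intends: the paper dispenses with the proof by noting that every finite automaton can be simulated by a Turing machine that always halts, which is precisely the simulation you spell out, and the transfer of conditions (i) and (ii) is immediate. No gaps.
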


\begin{theorem}\label{main_fssp}
Let $T$ be a real with $0<T<1$.
For every $X\in\XI$, if $X$ is strictly Chaitin $T$-random, then $X$ is finite-state strongly predictable.
\hfill\IEEEQED
\end{theorem}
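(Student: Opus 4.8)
The plan is to turn the two–sided bound $Tn-c\le H(\rest{X}{n})\le Tn+d$ coming from strict Chaitin $T$-randomness into a \emph{combinatorial} constraint on the factors of $X$, and then read off a fixed finite pattern that recurs infinitely often along $X$ and is eventually always followed by the same bit. A sliding-window automaton that fires exactly on this pattern will then witness finite-state strong predictability, since suspending ($N$) on all other positions makes conditions (i) and (ii) of Definition~\ref{def_fssp} immediate.

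The heart of the argument is the following lemma, which I would isolate first: since $0<T$, the sequence $X$ cannot contain runs of zeros of unbounded length. Suppose for contradiction that for every $k$ there is an $n$ with $X(n-k+1)=\dots=X(n)=0$. Fixing such an $n=n(k)$ we have $\rest{X}{n}=\rest{X}{n-k}\,0^{k}$, so concatenating a self-delimiting code for $k$ with an optimal program for $\rest{X}{n-k}$ gives $H(\rest{X}{n})\le H(\rest{X}{n-k})+H(k)+O(1)$. Strict $T$-compressibility bounds the first term by $T(n-k)+d$, while weak Chaitin $T$-randomness forces $H(\rest{X}{n})\ge Tn-c$; with $H(k)=O(\log k)$ these combine to $Tk\le O(\log k)+O(1)$, which is false once $k$ is large. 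Hence there is a bound $k_0$ such that no run of zeros in $X$ exceeds $k_0$, and in particular $0^{k_0+1}$ is a forbidden factor. I expect \emph{this} step --- converting the purely complexity-theoretic hypotheses into a hard local constraint on $X$, using $T>0$ essentially --- to be the crux; everything afterward is bookkeeping. (It is also consistent with Theorem~\ref{ZT=1nonsp}: at $T=1$ strict compressibility fails, so the argument does not apply and runs may be unbounded.)

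From the run bound I would extract the pattern as follows. Since $X$ is not eventually constant --- an eventually constant real is computable, contradicting weak Chaitin $T$-randomness for $T>0$ --- it has infinitely many maximal runs of zeros, each of length in $\{1,\dots,k_0\}$; let $k^{*}$ be the largest length occurring infinitely often. Then $0^{k^{*}}$ occurs infinitely often, whereas maximal runs strictly longer than $k^{*}$ occur only finitely often, so they all lie within the first $N_0$ bits for some $N_0$. Consequently, for every $n>N_0$ whose last $k^{*}$ bits are all zero, that block is exactly a maximal run of zeros (it cannot extend, as no longer run survives past $N_0$), hence $X(n+1)=1$.

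Finally I would build the automaton $M=(Q,\{0,1\},\delta,q_0,\{0,1,N\},f)$ whose state records a counter bounded by $N_0$, signalling whether position $N_0$ has been passed, together with the number of trailing zeros read so far, capped at $k^{*}$; both components are finite, so $M$ is a finite automaton with outputs. I let $f$ return $1$ exactly when the counter exceeds $N_0$ and the trailing-zero count equals $k^{*}$, and return $N$ otherwise. By the previous paragraph every non-$N$ output is correct, giving (i), and the infinitely many occurrences of $0^{k^{*}}$ beyond position $N_0$ force infinitely many non-$N$ outputs, giving (ii). The one subtlety to treat with care is that occurrences of $0^{k^{*}}$ nested inside a longer run must be suppressed; this is precisely the role of the initial $N_0$-counter, since every such longer run occurs within the first $N_0$ bits.
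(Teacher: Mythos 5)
Your proof is correct and follows essentially the same route as the paper: the paper first proves the same bounded-run lemma (attributed to Calude, Hay, and Stephan) from $H(x0^k)\le H(x)+H(k)+O(1)$ together with the two-sided bound $\abs{H(\rest{X}{n})-Tn}\le d_0$, then takes $L=\limsup_n a(n)$ of the maximal zero-run lengths and a threshold position $m$ beyond which no run exceeds $L$, and builds exactly your counter automaton that outputs $1$ on suffixes $0^L$ past the threshold and $N$ otherwise. Your handling of the nested-run subtlety via the initial counter matches the paper's states $q_0,\dots,q_m$.
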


\vspace*{-1mm}

In order to prove Theorem~\ref{main_fssp} we need the following theorem. For completeness, we include its proof.

\begin{theorem}[Calude, Hay, and Stephan \cite{CHS11}]\label{bounded-run}
Let $T$ be a real with $0<T<1$.
For every $X\in\XI$, if $X$ is strictly Chaitin $T$-random,
then
there exists $L\ge 2$ such that $X$ does not have a run of $L$ consecutive zeros.
\end{theorem}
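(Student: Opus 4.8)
The plan is to argue by contradiction. Suppose $X$ is strictly Chaitin $T$-random yet contains runs of zeros of every length; I will derive a clash between the lower bound supplied by weak Chaitin $T$-randomness and the upper bound supplied by strict $T$-compressibility. The mechanism is that a long trailing run of zeros makes an initial segment of $X$ far more compressible than randomness can tolerate.

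First I would fix a run of $L$ consecutive zeros and position the prefix so that the run sits at its right end: if $X(m+1)=\dots=X(m+L)=0$, set $n=m+L$, so that $\rest{X}{n}=\rest{X}{n-L}0^L$, the string $\rest{X}{n-L}$ followed by $L$ zeros. The key estimate is that appending a run of $L$ zeros costs only about $\log L$ bits of description. Building a prefix-free machine that reads a $U$-program for $\rest{X}{n-L}$, then a self-delimiting code for $L$, and outputs $\rest{X}{n-L}$ followed by $L$ zeros, and invoking optimality of $U$, yields $H(\rest{X}{n})\le H(\rest{X}{n-L})+H(L)+O(1)$, where $H(L)\le 2\log_2 L+O(1)$.

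Next I would feed both halves of the randomness hypothesis into this estimate. Weak Chaitin $T$-randomness gives a constant $c$ with $Tn-c\le H(\rest{X}{n})$, while strict $T$-compressibility gives a constant $d$ with $H(\rest{X}{n-L})\le T(n-L)+d$. Chaining these with the appending estimate produces $Tn-c\le T(n-L)+d+2\log_2 L+O(1)$, and after cancelling $Tn$ this collapses to $TL-2\log_2 L\le c+d+O(1)$. Since $T>0$ is fixed while $c$ and $d$ are constants independent of $L$, the left-hand side tends to infinity, so the inequality fails for all sufficiently large $L$. Hence there is a length $L_0$ beyond which it fails, so $X$ has no run of $L_0$ consecutive zeros; replacing $L_0$ by $\max(L_0,2)$ if necessary furnishes the desired $L\ge 2$.

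I expect the main obstacle to be the appending estimate rather than the arithmetic: one must check that the concatenation of a minimal program for $\rest{X}{n-L}$ with a self-delimiting code for $L$ is itself prefix-free, so that optimality of $U$ applies and the extra term is genuinely only logarithmic in $L$. It is also worth emphasizing why both halves of strict Chaitin $T$-randomness are needed. The lower bound alone cannot bound $L$, since the trivial upper estimate $H(\rest{X}{n-L})\le(n-L)+O(\log n)$ would only force $L\lesssim(1-T)n$, which grows with $n$; it is precisely the upper bound $T(n-L)+d$ from strict compressibility that lets the run-length cost $TL$ dominate its description savings $2\log_2 L$.
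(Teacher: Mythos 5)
Your proposal is correct and follows essentially the same route as the paper: both rest on the appending estimate $H(x0^n)\le H(x)+H(n)+O(1)$ combined with the two-sided bound $\abs{H(\rest{X}{n})-Tn}=O(1)$ supplied by strict Chaitin $T$-randomness. The only cosmetic difference is in the final step, where the paper fixes a block length $c$ with $H(c)+d\le Tc-1$ and iterates that block $k_0$ times to amplify the contradiction, whereas you let $L\to\infty$ and compare $TL$ against $2\log_2 L$ directly; the two arguments are interchangeable.
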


\vspace*{-1mm}

\begin{proof}
Based on the optimality of $U$ used in the definition of $H(x)$, it is easy to show that there exists $d\in\N$ such that, for every $x\in\X$ and every $n\in\N$,
\vspace*{-0.5mm}
\begin{equation}\label{HVs01nleHs+Hn+d}
  H(x0^n)\le H(x)+H(n)+d.
\end{equation}
\vspace*{-4.0mm}\\
Since $T>0$, it follows
also from the optimality of $U$ that there exists $c\in\N^+$ such that $H(c)+d\le Tc-1$.
Hence, by \eqref{HVs01nleHs+Hn+d} we see that,
for every $x\in\X$,
\vspace*{-0.5mm}
\begin{equation}\label{Hs0cleHs+Tc-1}
  H(x0^c)\le H(x)+Tc-1.
\end{equation}

\vspace*{-2mm}

Now, suppose that $X\in\XI$ is strictly Chaitin $T$-random.
Then there exists $d_0\in\N$ such that, for every $n\in\N$,
\begin{equation}\label{Hbn-Tnld0}
  \abs{H(\rest{X}{n})-Tn}\le d_0.
\end{equation}
We choose a particular $k_0\in\N^+$ with $k_0>2d_0$.

Assume that $X$ has a run of $ck_0$ consecutive zeros.
Then $\rest{X}{n_0}0^{ck_0}=\rest{X}{n_0+ck_0}$ for some $n_0\in\N$.
It follows from \eqref{Hs0cleHs+Tc-1} that
$H(\rest{X}{n_0+ck_0})-T(n_0+ck_0)+k_0\le H(\rest{X}{n_0})-Tn_0$.
Thus, using \eqref{Hbn-Tnld0} we have $-d_0+k_0\le d_0$, which contradicts the fact that $k_0>2d$.
Hence, $X$ does not have a run of $ck_0$ consecutive zeros, as desired.
\end{proof}

\begin{proof}[Proof of Theorem~\ref{main_fssp}]
Suppose that $X\in\XI$ is strictly Chaitin $T$-random.
Then, by Theorem~\ref{bounded-run},
there exists $d\ge 2$ such that $X$ does not have a run of $d$ consecutive zeros.
For each $n\in\N^+$, let $a(n)$ be the length of the $n$th block of consecutive zeros in $X$ from the left.
Namely, assume that $X$ has the form
$X=1^{b(0)}0^{a(1)}1^{b(1)}0^{a(2)}1^{b(2)}0^{a(3)}1^{b(3)}\dotsm\dotsm$
for some natural number $b(0)$ and some infinite sequence $b(1),b(2),b(3),\dotsc$ of positive integers.
Let
$L=\limsup_{n\to\infty}a(n)$.
Since $1\le a(n)<d$ for all $n\in\N^+$, we have $L\in\N^+$.
Moreover, since $\{a(n)\}$ is a sequence of positive integers, there exists $n_0\in\N^+$ such that
\vspace*{-1.0mm}
\begin{equation}\label{Lp1}
  a(n)\le L
\end{equation}
\vspace*{-4.5mm}\\
for every $n\ge n_0$, and
\vspace*{-2.5mm}
\begin{equation}\label{Lp2}
  a(n)=L
\end{equation}
\vspace*{-3.5mm}\\
for infinitely many $n\ge n_0$.
Let $m$ be the length of the prefix of $X$ which lies immediately to the left of the $n_0$th block of consecutive zeros in $X$.
Namely,
$m=\sum_{k=0}^{n_0-1}b(k)+\sum_{k=1}^{n_0-1}a(k)$.

Now, we define a finite automaton with outputs $M=(Q,\X,\delta,q_0,\{0,1,N\},f)$ as follows:
First, $Q$ is defined as $\{q_0,q_1,\dots,q_{m+L}\}$.
The transition function $\delta$ is then defined by
\vspace*{-3.0mm}
\begin{align*}
\delta(q_{i},0)&=\delta(q_{i},1)=q_{i+1}\quad\text{ if }i=0,\dots,m-1,\\
\delta(q_{i},0)&=q_{i+1}\quad\text{ if }i=m,\dots,m+L-1,\\
\delta(q_{i},1)&=q_{m}\quad\text{ if }i=m,\dots,m+L,
\end{align*}
\vspace*{-4.0mm}\\
where $\delta(q_{m+L},0)$ is arbitrary.
Finally, the output function $f\colon Q\to\{0,1,N\}$ is defined by
$f(q)=1$ if $q=q_{m+L}$ and $N$ otherwise.

\vspace*{-2mm}

Then, it is easy to see that, for every $x\in\X$,
\vspace*{-1mm}
\begin{enumerate}
\item $M(x)=1$ if and only if there exists $y\in\X$ such that $\abs{y}\ge m$ and $x=y0^L$, and
\item $M(x)\neq 0$.
\end{enumerate}

Now, for an arbitrary $n\in\N$, assume that $M(\rest{X}{n})\neq N$.
Then, by the condition (ii) above, we have $M(\rest{X}{n})=1$.
Therefore, by the condition (i) above, there exists $y\in\X$ such that $\abs{y}\ge m$ and $\rest{X}{n}=y0^L$.
It follows from \eqref{Lp1} that $X(n+1)=1$ and therefore $M(\rest{X}{n})=X(n+1)$.
Thus the condition (i) of Definition~\ref{def_fssp} holds for $M$ and $X$.
On the other hand, using \eqref{Lp2} and the condition (i) above, it is easy to see that the set $\{n\in\N\mid M(\rest{X}{n})=1\}$ is infinite.
Thus the condition (ii) of Definition~\ref{def_fssp} holds for $M$ and $X$.
Hence, $X$ is finite-state strongly predictable. 
\end{proof}

\begin{theorem}\label{main1}
Let $T$ be a computable real with $0<T<1$. Then $Z(T)$ is finite-state strongly predictable.
\end{theorem}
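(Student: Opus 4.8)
The plan is to derive this statement directly by chaining two results already established in the excerpt, so that no new automata-theoretic or combinatorial work is needed. The key observation is that $Z(T)$, regarded as the infinite binary sequence given by the base-two expansion of its fractional part (under the convention fixed in Section~\ref{preliminaries}), is exactly the kind of object to which Theorem~\ref{main_fssp} applies, and that the hypotheses imposed on $T$ are precisely those required to certify the randomness condition feeding that theorem.

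First I would invoke Theorem~\ref{ZVTwCTr-strictTcb}: since $T$ is a computable real with $0<T<1$, part (i) of that theorem asserts that $Z(T)$ is strictly Chaitin $T$-random. This is the only property of $Z(T)$ that the argument will use, and it bundles together both the weak Chaitin $T$-randomness and the strict $T$-compressibility of $Z(T)$.

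Next I would apply Theorem~\ref{main_fssp} with $X=Z(T)$. The hypotheses of that theorem are that $0<T<1$ and that $X$ is strictly Chaitin $T$-random, both of which we have just verified for $X=Z(T)$. Its conclusion is precisely that $Z(T)$ is finite-state strongly predictable, which is the desired statement.

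The main obstacle is essentially absent at this stage, since all the substantive content has already been carried out in the proof of Theorem~\ref{main_fssp}, which itself rests on the bounded-run property of Theorem~\ref{bounded-run} (Calude, Hay, and Stephan): strict Chaitin $T$-randomness forces a uniform bound $L$ on runs of consecutive zeros, and the predicting automaton simply outputs a guess of $1$ after seeing $L$ zeros and suspends otherwise. If any point requires care, it is merely the bookkeeping that identifies the real $Z(T)$ with an element of $\XI$ consistently across the two theorems; here $0<Z(T)<1$ guarantees that the fractional part of $Z(T)$ is $Z(T)$ itself, so the notions \emph{strictly Chaitin $T$-random} and \emph{finite-state strongly predictable} refer to the same sequence. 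With that observation the theorem follows immediately.
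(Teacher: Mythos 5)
Your proposal is correct and matches the paper's own proof exactly: the paper also derives Theorem~\ref{main1} immediately by combining part (i) of Theorem~\ref{ZVTwCTr-strictTcb} with Theorem~\ref{main_fssp}. The extra remark about identifying $Z(T)$ with its base-two expansion in $\XI$ is a harmless bit of bookkeeping consistent with the conventions fixed in Section~\ref{preliminaries}.
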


\vspace*{-1mm}

\begin{proof}
The result follows immediately from (i) of Theorem~\ref{ZVTwCTr-strictTcb} and Theorem~\ref{main_fssp}.
\end{proof}

In the case where $T$ is a computable real with $0<T<1$, $Z(T)$ is not computable despite Theorem~\ref{main1}.  
This is because, in such a case, $Z(T)$ is weakly Chaitin $T$-random by (i) of Theorem~\ref{ZVTwCTr-strictTcb},
and therefore $Z(T)$ cannot be computable.

It is worthwhile to investigate the behavior of $Z(T)$ in the case where $T$ is not computable but $0<T<1$.
On the one hand, note that $Z(T)$ is of class $C^{\infty}$ as a function of $T\in(0,1)$ \cite{T02} and $\frac{d}{dT}Z(T)>0$ for every $T\in(0,1)$.
On the other hand, recall that a real is weakly Chaitin random almost everywhere.
Thus, by Theorem~\ref{wCrnonsp}, we have $\mathcal{L}(S)=1$,
where $S$ is the set of all $T\in(0,1)$ such that $T$ is not computable and $Z(T)$ is not strongly predictable, and $\mathcal{L}$ is Lebesgue measure on $\R$.

\vspace*{-1.0mm}

\section*{Acknowledgments}
The author is grateful to
Professor
Cristian S. Calude
for his
encouragement.
This work was supported by JSPS KAKENHI Grant Number 23340020.

\vspace*{-2mm}

\end{document}